\title{Orthogonal Fold \& Cut}
\author{%
  \tabcolsep=1.5em
  \begin{tabular}{cccc}
    Hayashi Ani%
      \thanks{MIT Computer Science and Artificial Intelligence Laboratory,
        32 Vassar Street, Cambridge, MA 02139, USA,
        \protect\url{{brunnerj,edemaine,mdemaine,dylanhen,vluo,rachanam}@mit.edu}}
    &
    Josh Brunner\footnotemark[1]
    &
    Erik D. Demaine\footnotemark[1]
    &
    Martin L. Demaine\footnotemark[1]
  \end{tabular}\and
  \tabcolsep=1.5em
  \begin{tabular}{ccc}
    Dylan Hendrickson\footnotemark[1]
    &
    Victor Luo\footnotemark[1]
    &
    Rachana Madhukara\footnotemark[1]
  \end{tabular}
}
\date{}
\newif\ifabstract
\newif\iffull
\makeatletter \hypersetup{pdftitle={\@title}}}
 \gdef\xxxmark{%
   \expandafter\ifx\csname @mpargs\endcsname\relax 
     \expandafter\ifx\csname @captype\endcsname\relax 
       \marginpar{xxx}
     \else
       xxx 
     \fi
   \else
     xxx 
   \fi}
 \gdef\xxx{\@ifnextchar[\xxx@lab\xxx@nolab}
 \long\gdef\xxx@lab[#1]#2{\textbf{[\xxxmark #2 ---{\sc #1}]}}
 \long\gdef\xxx@nolab#1{\textbf{[\xxxmark #1]}}
\gdef\fps@figure{!htbp}}
\let\realbfseries=\bfseries
\def\bfseries{\realbfseries\boldmath}
\newtheorem{theorem}{Theorem}[section]
\newtheorem{lemma}[theorem]{Lemma}
\newtheorem{proposition}[theorem]{Proposition}
\newtheorem{corollary}[theorem]{Corollary}
\let\epsilon=\varepsilon
\def\defn#1{\textbf{\textit{\boldmath #1}}}
\begin{document}
\maketitle

\begin{abstract}
  We characterize the cut patterns that can be produced by
  ``orthogonal fold \& cut'':
  folding an axis-aligned rectangular sheet of paper
  along horizontal and vertical creases,
  and then making a single straight cut (at any angle).
  Along the way, we solve a handful of related problems:
  orthogonal fold \& punch, 1D fold \& cut, signed 1D fold \& cut,
  and 1D interval fold \& cut.
\end{abstract}

\section{Introduction}
\label{sec:intro}

Given a rectangular piece of paper marked with a pattern of line segments called
\defn{cuts}, the \defn{fold \& cut} problem asks us to fold the paper into a
flat origami and then make a single infinite cut to cut exactly along the
desired pattern of line segments.
More formally, \defn{making a single infinite cut} is the operation of
subtracting a given \defn{cut line}, i.e.,
removing from the folded paper all points that lie along the cut line;
in particular, the cut line may overlap folded creases and
may meet the folded paper at one or more line segments,
some of which may have zero length (i.e., be individual points).%
\footnote{We normally view the piece of paper as an open set
  (excluding its boundary), so that the connected components resulting from a
  cut are also open sets, but this is not essential.}
%
The goal of the fold \& cut problem is to fold the given piece of paper
such that the specified cuts all lie on the same line,
and no other points of the paper lie on that line, so that
making a single infinite cut along that cut line
produces the desired set of cuts (some of which may have zero length).
It is known that fold \& cut is solvable for any (finite) set of line segments
\cite{bern2002disk,demaine1998folding,Demaine-O'Rourke-2007}.

The \defn{orthogonal fold \& cut} problem asks the same question,
but restricted to \defn{orthogonal} folds,
meaning that every crease must be parallel to an edge
of the rectangle; we orient the paper to be \defn{axis-aligned}
so that the edges and creases are horizontal and vertical in~$\mathbb R^2$.
Again we allow only a single infinite cut, which may be at any angle.
It is easy to construct sets of cuts that cannot be obtained this way,
but some complicated-looking shapes still can be. 
Figure~\ref{fig:example} shows an example instance and its solution;
Figures~\ref{fig:puzzle font} and~\ref{fig:font} show several more examples
(but also some examples where the paper is not an axis-aligned rectangle,
so not an instance of orthogonal fold \& cut).

\begin{figure}
  \centering
  \includegraphics{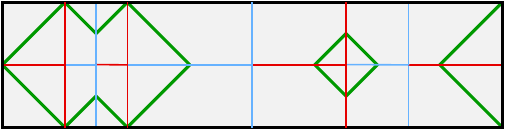}
  \caption{An example instance of orthogonal fold \& cut (bold green lines), and the crease pattern our algorithm generates (thin red mountains and blue valleys).}
  \label{fig:example}
\end{figure}

\begin{figure}
  \centering
  \includegraphics[width=\linewidth]{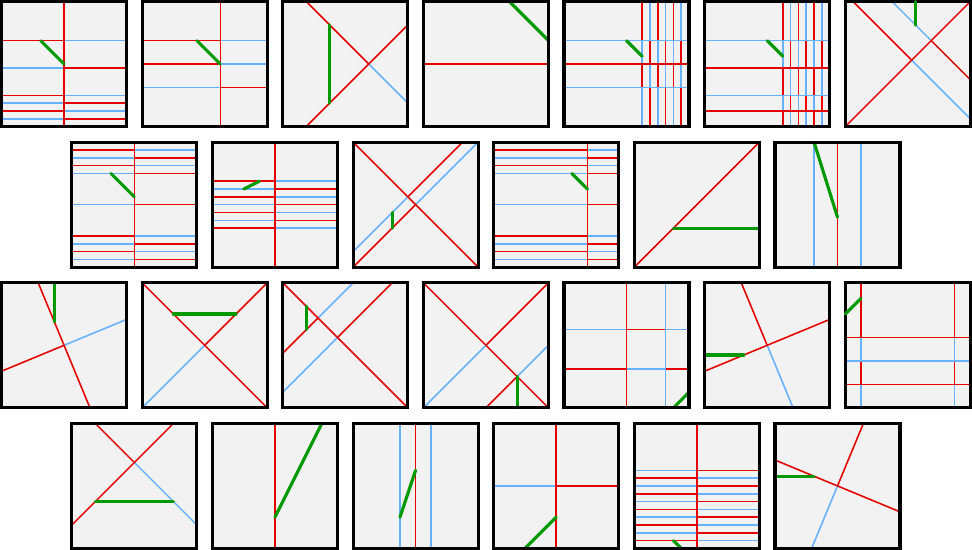}
  \caption{Our orthogonal fold \& cut puzzle font.
    If you fold along the thin lines and cut along the green line,
    what shapes do you get? (Solutions in Figure~\ref{fig:font}.)}
  \label{fig:puzzle font}
\end{figure}

In this paper, we characterize when the orthogonal fold \& cut problem
is solvable, that is, which sets of cuts on an axis-aligned rectangle of paper
can be obtained by horizontal and verticals folds and a single infinite cut.
In addition, we provide characterizations for several simpler related problems:
\begin{enumerate}
\item Given a line segment of paper marked with a pattern of \defn{cut points}, \defn{1D fold \& cut} asks us to flat-fold the paper segment and make a single (0-dimensional) point cut that hits exactly the cut points.
  Even if we do not allow folds at cut points,
  this problem is always solvable; see Section~\ref{sec:1d basic}.
\item In \defn{signed 1D fold \& cut}, each cut point is given a sign, either positive or negative, and we are asked to have the folded paper oriented according to the sign at each cut point.
Intuitively, the cut points are marked on only one side, and they need to all be face up.
  This problem is solvable if and only if the signs alternate;
  see Section~\ref{sec:1d signed}.
\item In \defn{1D interval fold \& cut}, we are asked to flat-fold a line segment of paper so that specified \defn{cut intervals} lie on a common interval in the folded state, and no other part of the paper segment folds to this interval. Additionally, the cut intervals contain some marked creases that are required to be folded, and no other creases can be made within any cut interval.
  We give an algorithmic characterization of solvability in
  Section~\ref{sec:1d interval}.
\item Returning to 2D paper, in \defn{orthogonal fold \& punch},
  we are given a rectangular piece of paper marked with points
  called \defn{holes}, and are asked to fold the paper flat
  using only orthogonal folds and then punch out a single point
  to remove exactly the specified holes.
  It is known that the nonorthogonal \defn{fold \& punch} problem
  always has a solution, in particular because it is a special case of
  fold \& cut \cite{asao2017folding}.
  For the orthogonal version, we give an algorithmic characterization of
  solvability in Section~\ref{sec:of&p} as a warmup to our solution to
  orthogonal fold \& cut.
\end{enumerate}
Although we allow creases to be along cuts in 2D paper,
or at cut points in 1D paper, none of our solutions do so.
Thus, all of our results also hold in the \defn{scissor-cut} model
\cite{demaine1998folding,Demaine-O'Rourke-2007}: in the flat folding,
every cut has material on both sides (as needed for cutting with scissors).
This is in contrast to general fold \& cut, where some instances have
solutions but not with scissor cuts; in the problems we study,
the models are equivalent.

\begin{figure}
  \centering
  \includegraphics[width=\linewidth]{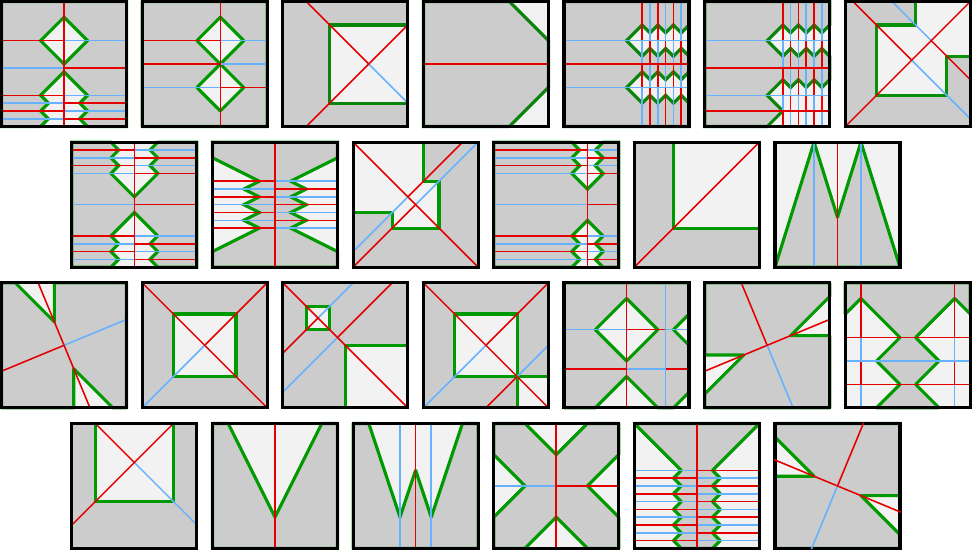}
  \caption{Our orthogonal fold \& cut mathematical font / solutions to the puzzles in Figure~\ref{fig:puzzle font}, in the notation of Figure~\ref{fig:example}.
    Note that C, G, J, L, N, O, P, Q, S, U, and Z have folds that are not
    parallel to the edges of the paper, so only
    A, B, D, E, F, H, I, K, M, R, T, V, W, X, and Y
    are proper instances of the orthogonal fold \& cut problem
    solved in this paper.}
  \label{fig:font}
\end{figure}

Our characterizations take the form of defining an easy-to-compute \defn{canonical} crease pattern and showing that, if the problem has any solution, then the canonical solution works.
For example, the canonical crease pattern for unsigned and signed 1D fold \& cut puts a crease point at the midpoint between each consecutive pair of cut points.
For 1D interval fold \& cut, the canonical crease pattern has a crease at each required crease and at the midpoint between each pair of consecutive cut intervals.

For the main problem, orthogonal fold \& cut, we first note that the slopes of
cut lines in a solvable instance must all be $\pm\alpha$ for a common $\alpha$.
In the degenerate cases $\alpha=0$ and $\infty$, all cuts are parallel,
and we show that the problem reduces to 1D fold \& cut.
Otherwise, we can scale the paper so $\alpha=1$ (i.e., all cuts have
an angle of $\pm 45^\circ$ to the axes), as in Figure~\ref{fig:example}.
Our main result characterizes the solvable instances of orthogonal fold
\& cut in terms of ``stripes'' and ``bands'', by factoring the instance into two
orthogonal instances of 1D interval fold \& cut.
See Section~\ref{sec:of&c} for details,
including the definition of the canonical crease pattern.

We also designed a mathematical font based on orthogonal fold \& cut,
where the creases all meet orthogonally to each other and
the cut-out shapes form letters of the alphabet, as shown in
Figure~\ref{fig:font}.%
\footnote{See also \url{https://erikdemaine.org/fonts/orthofoldcut/}
  for a web app where you can write your own messages in the font.}
The crease patterns are canonical, as output by our algorithm.
The cut patterns are all designed by hand to guarantee solvability.
By hiding all but one of the cuts, as in
Figure~\ref{fig:puzzle font}, we also obtain a \defn{puzzle font},
where reading each letter requires imagining the result of the fold \& cut
process.  Notably, while our font uses a consistent square shape of paper,
the paper is not an axis-aligned rectangle in the sense of having sides
parallel to the allowed crease lines.  We discuss this open problem further in
Section~\ref{sec:open}.

\section{1D Fold \& Cut Problems}\label{sec:1d}

\subsection{1D Fold \& Cut}
\label{sec:1d basic}

We first consider the simpler problem of fold \& cut with one-dimensional paper,
called \defn{1D fold \& cut}.
Given a line segment of paper with marked \defn{cut points},
this problem asks us to fold the paper flat (meaning, onto a line)
to align the given cut points onto a common point,
with no other folded paper overlapping that point.
Thus, if we make a single (0-dimensional) cut
by removing all points of paper that overlap that common point,
we end up removing exactly the specified cut points.
We show this problem always has a solution,
even if we forbid folds at cut points.

To solve 1D fold \& cut, we claim that it suffices to specify the locations of creases.
Specifying the locations of unsigned creases uniquely determines the position of each point in the segment in the folded state.
Furthermore, any unsigned 1D crease pattern can be folded flat, e.g., with an \defn{accordion} fold that alternates mountain and valley \cite[Section~12.1]{Demaine-O'Rourke-2007}.
We use this approach to prove universality:

\begin{lemma}\label{lem:1df&c}
  Every instance of 1D fold \& cut is solvable, without creasing at cut points.
\end{lemma}

\begin{proof}
  Refer to Figure~\ref{fig:1d}.
  We put a crease at the midpoint between each pair of consecutive cut points.
  This crease aligns consecutive cut points, so together they align all cut points.
  In the folded state, each segment between consecutive cut points or between an end of the paper and the nearest cut point lies entirely on one side of the aligned cut points; thus exactly the cut points are aligned.
\end{proof}

\begin{figure}
  \centering
  \begin{subfigure}{\linewidth}
    \centering
    \includegraphics{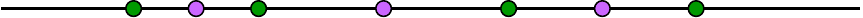}
  \end{subfigure}
  $$\downarrow$$
  \begin{subfigure}{\linewidth}
    \centering
    \includegraphics{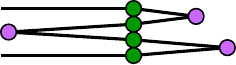}
  \end{subfigure}
  \caption{An instance of 1D fold \& cut and its canonical solution. The green points are cut points, and the purple points are creases in the canonical solution.}
  \label{fig:1d}
\end{figure}

Creasing the midpoint between consecutive cut points is in some sense the only option we have. There are solutions that have additional creases between cut points, but these extra creases do not accomplish anything: we could pull the paper out, leaving only a crease at the midpoint, while only locally (to the interval between cut points) changing the folded state and not affecting the aligned cut points.
Similarly, there are solutions that add creases between an extreme (first or
last) cut point and an adjacent paper endpoint, but these creases can be
removed while only locally changing the folded state
and not affecting the aligned cut points.
We call the solution that folds at midpoints the \defn{canonical} solution. Figure~\ref{fig:1d} shows an example instance and its canonical solution.

\subsection{Signed 1D Fold \& Cut}
\label{sec:1d signed}

\defn{Signed 1D fold \& cut} is an extension of 1D fold \& cut
where each cut point is given a \defn{sign}, either positive or negative.
We require any solution to have the additional property that the folded paper
is oriented in the direction according to the sign at each cut point.
One can imagine that each cut point is marked on only one side of the paper,
and we want the marks to all be face up when we cut.
This problem is a good model for orthogonal fold \& cut on a narrow strip, where the sign of a cut point is the same as the sign of the slope of a cut segment crossing the strip.
In this problem, we must explicitly forbid creases at cut points, because
the orientation of the paper is not well-defined at a crease.
Because the paper orientation changes at crease points,
independent of whether they are mountain or valley,
it again suffices to specify the locations of creases for this problem.

\begin{lemma}\label{lem:s1df&c}
  An instance of signed 1D fold \& cut is solvable if and only if
  the signs of cuts alternate along the segment of paper.
\end{lemma}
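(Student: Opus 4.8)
The plan is to reduce the condition on signs to a parity count of creases between consecutive cut points, exploiting the fact that the orientation (sign) of the paper at a point flips exactly when a crease is crossed. I would model the folded state as a continuous piecewise-linear ``position map'' $f$ of slope $\pm 1$ whose breakpoints are the creases, and let $p$ be the common position at which all cut points are aligned. The orientation at a point is then determined by the parity of the number of creases to its left, so two points have opposite orientation if and only if an odd number of creases separates them. Under this translation, the lemma becomes the statement that a valid solution forces an odd number of creases between each pair of consecutive cut points, and conversely that an odd count is always achievable.

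For the forward (``only if'') direction, I would take an arbitrary solution aligning all cut points at $p$ and examine two consecutive cut points $a < b$. Because a solution must hit \emph{exactly} the cut points, $f(x) \ne p$ for every $x \in (a,b)$, while $f(a) = f(b) = p$; since $f - p$ is continuous and nonvanishing on the open interval, it keeps a constant sign there, say positive. Then $f$ leaves $a$ with slope $+1$ and arrives at $b$ with slope $-1$, and since each crease flips the slope, the number of creases strictly between $a$ and $b$ must be odd. An odd number of creases flips orientation, so $a$ and $b$ receive opposite signs; applying this to every consecutive pair shows the signs must alternate.

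For the reverse (``if'') direction, I would invoke the canonical midpoint solution of Lemma~\ref{lem:1df&c}, which already aligns all cut points at a common point and places exactly one crease between each consecutive pair. One crease is odd, so the orientations strictly alternate along the cut points; if this alternating pattern is the global opposite of the prescribed signs, flipping the entire strip over corrects all signs simultaneously. Hence alternating signs are sufficient, completing both directions.

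The main obstacle I anticipate is making the interior ``no other alignment'' argument fully rigorous: justifying that $f - p$ cannot vanish anywhere in the open interval $(a,b)$ (using that only cut points may land on $p$), that the one-sided slopes at $a$ and $b$ are well defined because creases are forbidden at cut points, and that ``odd number of slope flips'' translates cleanly into ``opposite orientation.'' Formalizing this orientation-parity correspondence together with the flat-foldability bookkeeping is where the care lies; once it is in place, the sign-alternation conclusion follows immediately.
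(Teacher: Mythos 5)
Your proposal is correct and follows essentially the same route as the paper: the necessity argument is just the contrapositive of the paper's (the paper assumes equal signs and uses the intermediate value theorem to produce a forbidden extra alignment point, while you assume exact alignment, deduce that $f-p$ has constant sign on $(a,b)$, and read off opposite slopes at $a$ and $b$), and the sufficiency argument is identical (canonical midpoint creases plus choosing the overall parity of the accordion). The crease-parity counting is a harmless detour; the slopes at $a$ and $b$ already give opposite orientations directly.
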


\begin{proof}
  Suppose there are two consecutive cut points with the same sign,
  which without loss of generality is positive.
  Let $a < b$ be the two positions of these cut points,
  viewing the segment of paper as a subset of the real line~$\mathbb R$,
  with the positive orientation being in the positive direction.
  Let $f(x)$ be the location of point $x$ in the folded state,
  which we also view as lying on the real line $\mathbb R$.
  Aligning the cut points requires $f(a)=f(b)$.
  That both cut points have positive sign requires that
  $f$ is increasing at both $a$ and~$b$.
  In particular, for some $\varepsilon>0$, we have
  $f(b-\varepsilon)<f(b)=f(a)<f(a+\varepsilon)$ and
  $a+\varepsilon < b-\varepsilon$.
  The Intermediate Value Theorem implies that there is a point $c$
  with $a < c < b$ and $f(c)=f(a)$.
  Thus this folding aligns a third point $c$ with $a$ and~$b$,
  which cannot be a crease point because $a$ and $b$ were assumed to be consecutive,
  and thus the folding is not a valid solution for aligning exactly the
  cut points.
  Therefore cut points having alternating signs is necessary for signed 1D fold \& cut to be solvable.

  Now suppose the signs do alternate.
  Then the canonical solution from Lemma~\ref{lem:1df&c} is still a solution,
  assuming we rotate the folded state as needed to orient the initial segment of paper to have the
  correct sign: there is one crease between each pair of consecutive cut
  points, so the orientation of the paper is opposite at the two cut points.
\end{proof}

\subsection{1D Interval Fold \& Cut}
\label{sec:1d interval}

In \defn{1D interval fold \& cut}, we are given a line segment of paper
together with some open intervals called \defn{cut intervals},
and we are asked to find a flat folding that places all cut intervals onto
a common open interval called the \defn{final cut interval},
with no other part of the paper segment folding to the final cut interval,
except possibly for creased endpoints shared by two cut intervals.
Define \defn{empty intervals} to be the intervals of paper resulting
from removing the cut intervals,
which are closed except at the ends of the paper;
thus the flat folding must place every empty interval of positive length
outside of the final cut interval,
and have a crease at every zero-length empty interval.%
%
\footnote{In Section~\ref{sec:intro}, we viewed zero-length empty intervals
  as required creases within cut intervals, which is equivalent.}
Additionally, no creases are permitted within any cut interval.
Figure~\ref{fig:1dinterval} shows an example instance and solution.

\begin{figure}
  \centering
  \begin{subfigure}{\linewidth}
    \centering
    \includegraphics[scale=1]{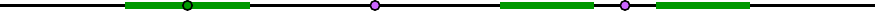}
  \end{subfigure}
  $$\downarrow$$
  \begin{subfigure}{\linewidth}
    \centering
    \includegraphics[scale=1]{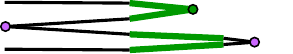}
  \end{subfigure}
    \caption{An example instance and solution of 1D interval fold \& cut.
      Empty intervals are black; the cut intervals are green;
      the green dot is a required fold in a zero-length empty interval;
      and the purple dots are the other added folds in the solution.}
  \label{fig:1dinterval}
\end{figure}

This problem will end up being a nice analog to our algorithm for orthogonal fold \& cut. We present it as a simpler case for intuition.

Define the \defn{canonical} crease pattern to consist of
a crease at the midpoint of each empty interval
(including zero-length empty intervals)
that is surrounded by two cut intervals.
We prove the following characterization:

\begin{lemma}\label{lem:1d interval}
  An instance of 1D interval fold \& cut is solvable if and only if the canonical crease pattern is a solution.
\end{lemma}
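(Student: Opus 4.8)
The forward (``if'') direction is immediate --- if the canonical crease pattern is a solution then the instance is certainly solvable --- so the real content is the converse: assuming \emph{some} solution exists, I must show the canonical crease pattern is also a solution. The plan is to start from an arbitrary solution, with folding map $f$ sending the cut intervals onto a common interval $I=[p,q]$ and nothing else onto $I$, and to edit it region by region into the canonical one while preserving both validity and, crucially, the folded image of everything outside the region being edited.

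First I would pin down the boundary behavior forced by any solution. Consider an uncut region $R=[b,c]$ lying between two consecutive cut intervals. Since the two flanking cut intervals map into $I$ while the open region $R$ must avoid $I$, continuity forces $f(b),f(c)\in\{p,q\}$; moreover $f(\operatorname{int} R)$ is connected and disjoint from $I$, so it lies entirely below $p$ or entirely above $q$, which in turn forces $f(b)=f(c)$ (both equal to $p$, or both equal to $q$). From this I read off that $R$ contains an odd number of creases and that the folding slopes at $b^{+}$ and $c^{-}$ are opposite and determined (for instance $-1$ and $+1$ in the ``below $p$'' case).

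The heart of the argument is a local replacement. I would replace all creases inside $R$ by the single crease at its midpoint, exactly as in Lemma~\ref{lem:1df&c}, and check that this reproduces the same data at both endpoints: $f(b)$, $f(c)$, and the two boundary slopes are unchanged (a one-line computation with the midpoint fold), while the interior still maps strictly to one side of $I$. Because $f$ together with its slope is preserved at both $b$ and $c$, and the crease pattern outside $R$ is untouched, the folded image of the entire paper outside $R$ is literally unchanged; in particular every cut interval still lands on $I$ and nothing new lands on $I$. The end regions (before the first and after the last cut interval) are handled the same way but more easily: their single inner endpoint sits at an endpoint of $I$ with the slope pointing away from $I$, so deleting all their creases leaves a straight segment that stays off $I$ and preserves the endpoint data. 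Performing these independent local edits on every uncut region and both end regions turns the arbitrary solution into one whose creases are exactly the required creases together with the uncut-region midpoints --- the canonical crease pattern --- with validity preserved throughout, proving it is a solution.

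I expect the main obstacle to be the orientation bookkeeping in the local replacement: it is not enough that the midpoint fold aligns the two endpoints of $R$ (that is just Lemma~\ref{lem:1df&c}); I must also verify it matches the boundary slopes, i.e.\ the parity of creases, so that the replacement is genuinely invisible to the rest of the folding. A secondary technicality is a crease placed exactly at a region boundary $a$, $b$, or $c$ in the given solution; I would dispose of this by noting that such a boundary crease can be absorbed into the adjacent region's replacement without changing any endpoint data, so it does not affect the argument.
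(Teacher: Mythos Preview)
Your proposal is correct and follows essentially the same approach as the paper: the paper's argument (given in the paragraph preceding the lemma rather than in a formal proof block) observes that the endpoints of each uncut region must land on an endpoint of the final cut interval, and that the creases inside such a region can therefore be replaced by a single midpoint crease ``without changing anything about the folding outside of the interval in question.'' You supply more of the bookkeeping (slope/parity matching at the region boundaries, treatment of the end regions), but the strategy and the key local-replacement idea are the same.
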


\begin{proof}
Consider an empty interval surrounded by two cut intervals.
If the empty interval has zero length, then it must be creased,
as in the canonical crease pattern.
If the empty interval has positive length, then both its endpoints
must fold to a common endpoint of the final cut interval,
because the cut intervals on each side must lie on the final cut interval,
but the empty interval must not overlap the final cut interval.
By the same argument as 1D fold \& cut in Section~\ref{sec:1d basic},
any solution achieving this can be modified to have only a single crease at the midpoint
of this empty interval without changing anything about the folding outside of
the empty interval in question.
Similarly, an empty interval incident to an endpoint of the paper segment
can be modified to not have any creases:
the endpoint incident to a cut interval (if any) must be folded to an
endpoint of the final cut interval, so the empty interval will safely
continue away from the final cut interval without any creases.
Because creases within cut intervals are not permitted,
we have arrived at the canonical crease pattern.
\end{proof}

Note that, unlike 1D fold \& cut, this problem is not always solvable.
For example, if there are two cut intervals of different lengths surrounded by positive-length empty intervals,
then it is impossible to fold them to perfectly align.

\section{Orthogonal Fold \& Punch}\label{sec:of&p}

We now move on to 2-dimensional paper, and consider orthogonal fold \& punch as a warm-up for orthogonal fold \& cut.

In the \defn{fold \& punch} problem, we are given a rectangular piece of paper
marked with points called \defn{holes}, and are asked to fold it flat to align
the given holes onto a common point called the \defn{punch point},
with no other folded paper overlapping the punch point.
Thus, if we \defn{punch} out the punch point, i.e., remove the points of
folded paper that overlap the punch point, we end up removing exactly the
specified set of holes.
Like fold \& cut, fold \& punch is always solvable; indeed, fold \& punch is
the special case of fold \& cut where all segments have length zero,
but fold \& punch admits simpler solutions \cite{asao2017folding}.

In \defn{orthogonal fold \& punch}, we are restricted to orthogonal creases
(parallel to the sides of the rectangle); like orthogonal fold \& cut,
we orient the paper to be an axis-aligned rectangle in $\mathbb R^2$
so that the creases are horizontal and vertical.
In this section, we characterize the patterns of holes
that can be punched out in this setting.
First, in Section~\ref{sec:crease patterns}, we consider general properties of
orthogonal crease patterns; this will also be useful for orthogonal fold \& cut.
Then, in Section~\ref{sec:solving of&p}, we solve orthogonal fold \& punch.

\subsection{Orthogonal Crease Patterns}\label{sec:crease patterns}

We first consider what possible orthogonal crease patterns look like.
As in the 1D case, we show that an orthogonal folding can be described
just by the positions of crease lines.

\begin{lemma}\label{lem:orthogonal crease patterns}
  An unsigned orthogonal crease pattern on rectangular paper that folds flat must consist of horizontal and vertical lines that go all the way across the paper. Moreover, every such set of horizontal and vertical lines can be folded flat.
\end{lemma}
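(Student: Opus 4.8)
The plan is to prove the two directions separately: first that any flat-foldable orthogonal crease pattern must have creases running fully across the paper, and then that any such full-length grid of horizontal and vertical lines does fold flat. The second direction is the easier one and I would dispatch it with an explicit construction, so I expect the first direction to be the main obstacle.

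For the forward direction, suppose we have an orthogonal crease pattern that folds flat, and suppose for contradiction that some crease does not extend all the way across the paper. Without loss of generality take it to be a horizontal crease with an endpoint $p$ in the interior of the paper. The key tool is \emph{Kawasaki's theorem} (or more elementarily, the local flat-foldability condition at an interior vertex): at any interior vertex where creases meet, the alternating sum of consecutive angles between creases must be zero, equivalently the creases must pair up so that opposite angles sum to $180^\circ$, and in particular an even number of creases must meet at the vertex. At the endpoint $p$ of our crease, locally only one crease emanates (a single ray), which is a degree-one vertex in the interior — this violates the even-degree requirement, since a valid flat-foldable interior vertex needs degree zero or at least two with the Kawasaki condition. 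With only horizontal and vertical creases available, the angles at an interior vertex are all multiples of $90^\circ$, and one can check directly that the only way to satisfy the angle-alternation condition is for creases to pass straight through: a horizontal crease arriving at an interior point must continue as a horizontal crease, and similarly for vertical. Hence every crease, once started, must continue until it reaches the boundary, i.e.\ it goes all the way across.

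For the converse, given any set of horizontal lines at positions $y_1 < y_2 < \cdots$ and vertical lines at positions $x_1 < x_2 < \cdots$, each spanning the paper, I would exhibit an explicit flat folding. The natural construction is an \emph{accordion fold}: first accordion-fold along all the horizontal creases (this is exactly the 1D flat folding from the reasoning before Lemma~\ref{lem:1df&c}, applied to the vertical axis), collapsing the paper to a horizontal strip, and then accordion-fold that strip along all the vertical creases. Because the horizontal and vertical creases are independent — folding along a horizontal line does not interfere with the vertical crease positions, since the vertical creases are themselves full-length lines that remain aligned through the horizontal accordion — the two folds compose without conflict, and the result is a flat-folded state realizing exactly the given crease lines. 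This reduces the two-dimensional flat-foldability to two independent applications of the one-dimensional case.

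The main obstacle is making the local-vertex argument in the forward direction fully rigorous, in particular arguing cleanly that an orthogonal crease cannot ``turn a corner'' or terminate in the interior. The cleanest route is to reduce to the standard single-vertex flat-foldability characterization (Kawasaki and Maekawa conditions) and then specialize to the orthogonal setting where all angles are $90^\circ$; the degree-one case (a dangling crease endpoint) is immediately ruled out because it cannot satisfy the even-degree or angle-sum conditions, and the corner case is ruled out because two perpendicular creases meeting at a point form a single degree-two vertex whose two angles ($90^\circ$ and $270^\circ$) fail Kawasaki. One subtlety to handle carefully is what happens at points on the boundary of the paper versus the interior, since boundary points are not subject to the interior-vertex conditions; the argument should be phrased so that it only constrains interior points, which is exactly what forces full-length creases.
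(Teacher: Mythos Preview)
Your proposal is correct and takes essentially the same approach as the paper: both use Kawasaki's criterion to rule out any interior vertex other than the full degree-4 crossing in the forward direction, and both give the two-stage accordion fold for the converse. Your write-up is in fact slightly more explicit in enumerating the degree-1 and degree-2 cases, and the order in which you accordion (horizontals then verticals versus the paper's verticals then horizontals) is immaterial.
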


\begin{proof}
  For a crease to end inside the paper, it must end in a vertex that folds flat. But the only nontrivial vertex that satisfies the Kawasaki criterion with only horizontal and vertical creases is the degree-4 vertex where a horizontal and a vertical line cross. This vertex cannot end a horizontal or vertical crease, so they must continue to the edge of the paper.

  To fold flat a crease pattern consisting of horizontal and vertical lines, we essentially solve the 1D problem twice. First, accordion fold the vertical creases. Each horizontal crease still lies on a horizontal line, so we can now accordion fold the horizontal creases.
\end{proof}

An unsigned crease pattern uniquely determines the location of points in the folded state, so solving orthogonal fold \& punch or orthogonal fold \& cut amounts to specifying the position of some horizontal lines and some vertical lines; we do not need to consider the orientation of each crease.

Next, we prove a crucial fact about the folded state of an orthogonal crease pattern, which will be relevant to our characterizations for both orthogonal fold \& punch and orthogonal fold \& cut.

\begin{lemma}\label{lem:orthogonal align}
  Suppose two points $(x_1,y_1)$ and $(x_2,y_2)$
  on an axis-aligned rectangle of paper
  are aligned in the folded state of a crease pattern
  with only horizontal and vertical creases.
  Then $(x_1,y_2)$ and $(x_2,y_1)$ are also aligned with them.
\end{lemma}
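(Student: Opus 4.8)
The plan is to show that an orthogonal folding acts on the two coordinates independently, so that the folded position of a point $(x,y)$ factors as $(f(x),g(y))$ for a pair of 1D folding maps $f$ and $g$; the lemma then reduces to a one-line computation.

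First I would recall, via Lemma~\ref{lem:orthogonal crease patterns}, that the crease pattern consists of vertical lines $x=c_1,c_2,\dots$ and horizontal lines $y=d_1,d_2,\dots$, each spanning the full paper, and that we may realize the folded state by first accordion-folding all the vertical creases and then all the horizontal creases. A reflection across a vertical line $x=c$ sends $(x,y)$ to $(2c-x,\,y)$, changing only the $x$-coordinate and in a way that depends only on $x$; symmetrically, a reflection across a horizontal line changes only $y$ and depends only on $y$. Since the vertical phase leaves every $y$-coordinate fixed and the horizontal phase leaves every $x$-coordinate fixed, the final $x$-coordinate is a function $f$ of $x$ alone (precisely the 1D folding determined by the vertical creases) and the final $y$-coordinate is a function $g$ of $y$ alone (the 1D folding determined by the horizontal creases). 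Thus the folded-position map is $(x,y)\mapsto(f(x),g(y))$.

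The main work, and the step I expect to be the real obstacle, is justifying this factorization rigorously: I must argue that folding along these creases never couples the two coordinates. This is exactly where the ``all the way across'' conclusion of Lemma~\ref{lem:orthogonal crease patterns} is essential — a crease that stopped inside the paper could reflect one coordinate only for certain values of the other, destroying the product structure. Because each crease is a complete line, every fold is a global reflection in a single coordinate, and performing all vertical folds before all horizontal folds keeps the two 1D problems entirely separate, so no subtlety about the order of operations arises.

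Given the factorization, the conclusion is immediate. The hypothesis that $(x_1,y_1)$ and $(x_2,y_2)$ are aligned means $(f(x_1),g(y_1))=(f(x_2),g(y_2))$, that is, $f(x_1)=f(x_2)$ and $g(y_1)=g(y_2)$. Then $(x_1,y_2)$ maps to $(f(x_1),g(y_2))=(f(x_1),g(y_1))$ and $(x_2,y_1)$ maps to $(f(x_2),g(y_1))=(f(x_1),g(y_1))$, so both land on the common image of the original two points. Hence all four points $(x_i,y_j)$ are aligned, as claimed.
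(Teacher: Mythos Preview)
Your argument is correct and matches the paper's proof essentially verbatim: both establish that the folded-position map factors as $(x,y)\mapsto(f(x),g(y))$ and then read off the conclusion from $f(x_1)=f(x_2)$, $g(y_1)=g(y_2)$. The only difference is cosmetic---you justify the factorization via the accordion realization from Lemma~\ref{lem:orthogonal crease patterns}, while the paper simply notes that orthogonal creases send vertical lines to vertical lines (and likewise horizontal), which amounts to the same thing.
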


\begin{proof}
  Because creases are all either horizontal or vertical, each vertical line on the paper lies on a vertical line in the folded state. In other words, the horizontal position of a point in the folded state depends only on its horizontal position on the paper. Let $h(x)$ be the horizontal position in the folded state of a point at horizontal position $x$ on the paper. Similarly, let $v(y)$ be the vertical position in the folded state of a point at vertical position $y$ on the paper, so $(x,y)$ folds to $(h(x),v(y))$. That the folded state factors in this way is the crucial property of orthogonal foldings of rectangular (or more generally, orthogonally convex) paper.

  Because $(x_1,y_1)$ and $(x_2,y_2)$ are aligned at $(h(x_1),v(y_1))$, we must have $h(x_1)=h(x_2)$ and $v(y_1)=v(y_2)$. Thus $(x_1,y_2)$ and $(x_2,y_1)$ also lie on $(h(x_1),v(y_1))$ in the folded state.
\end{proof}

\subsection{Solving Orthogonal Fold \& Punch}\label{sec:solving of&p}

With Lemma~\ref{lem:orthogonal align} at our disposal, it is straightforward to characterize the solvable hole patterns for orthogonal fold \& punch.

\begin{theorem}
  An instance of orthogonal fold \& punch is solvable if and only if the set of holes is a combinatorial rectangle. That is, the set of holes must be of the form $X\times Y$, where $X$ and $Y$ are finite sets of horizontal and vertical coordinates, respectively.
\end{theorem}

\begin{proof}
  Suppose the set of holes is $X\times Y$. We solve the instance by solving $X$ and $Y$ separately as instances of 1D fold \& cut. That is, we put a vertical crease at the midpoint between consecutive elements of $X$, and a horizontal crease at the midpoint between consecutive elements of $Y$. These creases align exactly the vertical lines at positions in $X$ and the horizontal lines at positions in $Y$, so they align exactly the points in $X\times Y$.

  Now suppose an instance of orthogonal fold \& punch is solvable, and let $X$ and $Y$ be the sets of horizontal and vertical coordinates of points, respectively. 
  Let $x\in X$ and $y\in Y$. Let $(x,y^\prime)$ and $(x^\prime,y)$ be holes; these exist by the definitions of $X$ and $Y$. The folded state aligns $(x,y^\prime)$ and $(x^\prime,y)$, so by Lemma~\ref{lem:orthogonal align} it also aligns $(x,y)$ with them. Thus $(x,y)$ must also be a hole, so the set of holes is all of $X\times Y$.
\end{proof}

\section{Orthogonal Fold \& Cut}\label{sec:of&c}

In this section, we characterize the patterns of cuts that can be obtained in orthogonal fold \& cut. In Section~\ref{sec:slopes}, we investigate necessary properties of a set of cuts by considering the slopes of the cut. Then, in Section~\ref{sec:finite slope} we characterize the solvable instances of orthogonal fold \& cut that satisfy these properties.

\subsection{Slopes of Cuts}\label{sec:slopes}

Considering slopes of cut lines immediately gives a significant constraint on solvable instances of orthogonal fold \& cut.

\begin{lemma}\label{lem:slopes}
  If an instance of orthogonal fold \& cut contains cuts of slopes $\alpha$ and $\beta$ with $\alpha\neq\pm\beta$, it is not solvable.
\end{lemma}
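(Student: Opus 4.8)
The plan is to exploit the factored form of orthogonal foldings established in the proof of Lemma~\ref{lem:orthogonal align}: every point $(x,y)$ of the paper folds to $(h(x),v(y))$, where $h$ and $v$ are the one-dimensional foldings of the horizontal and vertical coordinates. The essential observation is that $h$ and $v$ are each \emph{piecewise isometries}: away from creases, a 1D folding acts as $x\mapsto\pm x+c$, reflecting or translating each maximal crease-free interval. I would use this to track what happens to the slope of a cut under the folding.

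First I would record the slope transformation. Fix a cut segment and pass to a sub-segment that avoids creases; such a sub-segment exists because there are only finitely many creases, and shrinking a cut to a sub-segment does not change its slope. On the crease-free region containing it, $h(x)=\epsilon_1 x+a$ and $v(y)=\epsilon_2 y+b$ for signs $\epsilon_1,\epsilon_2\in\{+1,-1\}$ and constants $a,b$. A direction vector $(1,s)$ of a segment of slope $s$ is therefore sent to $(\epsilon_1,\epsilon_2 s)$, which has slope $\epsilon_2 s/\epsilon_1=\pm s$. Hence an orthogonal folding maps each crease-free cut segment of slope $s$ to a segment of slope $\pm s$, with a vertical segment sent to a vertical segment so the infinite-slope case is handled the same way.

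Next I would combine this with the fold \& cut requirement. In any solution, every cut segment lies on the single cut line in the folded state, so the image of each crease-free cut sub-segment is a portion of the cut line. Applying the slope computation to a cut of slope $\alpha$ shows the cut line has slope $\pm\alpha$, and applying it to a cut of slope $\beta$ shows the cut line has slope $\pm\beta$. Thus $\pm\alpha=\pm\beta$, that is $\alpha=\pm\beta$, contradicting the hypothesis $\alpha\neq\pm\beta$; so no solution exists.

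I expect the main subtlety to be justifying the piecewise-isometry claim and the attendant sign bookkeeping cleanly. In particular, a single cut segment may itself be folded across creases, which is precisely why I would restrict attention to a crease-free sub-segment rather than argue about the whole cut at once, and why the conclusion controls the slope only up to sign. The degenerate horizontal and vertical slopes need a one-line separate check, since ``slope'' is then $0$ or undefined; but the same factored map sends horizontal segments to horizontal segments and vertical to vertical, so the argument still goes through under the natural reading of $\pm\infty$.
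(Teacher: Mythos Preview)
Your proof is correct and follows essentially the same approach as the paper: orthogonal folds preserve the slope of any segment up to sign, so cuts of slopes $\alpha$ and $\beta$ cannot land on a common line unless $\alpha=\pm\beta$. The paper states this in a single line (``reflecting across a horizontal or vertical line negates slopes''), whereas you spell out the same fact via the factored map $(x,y)\mapsto(h(x),v(y))$ and the piecewise-isometry structure of $h$ and $v$; the underlying argument is identical.
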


\begin{proof}
  Reflecting across a horizontal or vertical line negates slopes. Thus a cut of slope $\alpha$ must lie on a line of slope $\pm\alpha$ in the folded state. If $\alpha\neq\pm\beta$, these two cuts cannot lie on the same line.
\end{proof}

For the rest of this section, we will assume the slope of every cut line is $\pm\alpha$.
We divide into two cases: $\alpha \in \{0,\infty\}$ and $0 < \alpha < \infty$.

For $\alpha=0$ or $\infty$, the cut pattern consists of entirely horizontal
or of entirely vertical segments respectively.
We show that these cases essentially reduce to 1D fold \& cut:

\begin{lemma}\label{lem:horizontal cuts}
  An instance of orthogonal fold \& cut consisting of only horizontal or only vertical cuts is solvable if and only if every cut goes all the way across the paper.
  Solvable instances do not need to crease along any cuts.
\end{lemma}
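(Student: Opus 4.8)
The plan is to exploit the factoring property of orthogonal foldings established in Lemma~\ref{lem:orthogonal align}: the folded image of a point $(x,y)$ has the form $(h(x),v(y))$, where the horizontal folded coordinate $h(x)$ depends only on $x$ and the vertical folded coordinate $v(y)$ depends only on $y$. I will treat the only-horizontal case; the only-vertical case is symmetric. First I would observe that, since every cut has slope $0$, the reflection argument of Lemma~\ref{lem:slopes} forces each cut to remain horizontal in the folded state, and because a single straight cut lies on one line, all the cuts must fold onto a common horizontal line $L$ at some vertical position $Y^*$.

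For the necessity (``only if'') direction, suppose the instance is solvable but some cut, lying at vertical position $y$, fails to span the full width of the paper. Since this cut folds onto $L$, we have $v(y)=Y^*$. By the factoring property, \emph{every} point $(x,y)$ in that entire row folds to $(h(x),Y^*)$, which lies on $L$. Because the cut does not reach an edge, there is an interior endpoint $(x_0,y)$ of the cut, and a nearby point $(x_0+\epsilon,y)$ that is genuine paper but not a cut point; this point folds onto $L$ and is therefore sliced by the single cut, contradicting the requirement that exactly the cut points be aligned. Hence every cut must go all the way across.

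For the sufficiency (``if'') direction, I would reduce to one-dimensional fold \& cut in the vertical direction. When every cut spans the full width, the set of cuts is determined by the finite set $Y$ of vertical coordinates at which a full-width horizontal cut occurs. Applying the canonical solution of Lemma~\ref{lem:1df&c} to $Y$---placing a horizontal crease at the midpoint between consecutive elements of $Y$, and using no vertical creases---aligns exactly the horizontal lines at positions in $Y$ onto the single line $L$. I would then verify correctness: any point of $L$ that carries paper has some source height $y$ with $v(y)=Y^*$, so $y\in Y$ by the $1$D analysis, and the full-width cut at height $y$ makes every point of that row a cut point; thus the cut along $L$ removes precisely the marked segments.

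I expect the substance of the argument to lie in the necessity direction, where the crux is recognizing that the factoring of the folded state forces the entire row at a cut's height onto the cut line. Once that is stated cleanly, the erroneous-cut contradiction is immediate; the sufficiency direction is then a direct invocation of the always-solvable $1$D result. The only point requiring care is the preliminary reduction establishing that all horizontal cuts share a single folded line $L$, which follows from the cut being a single straight line together with Lemma~\ref{lem:slopes}.
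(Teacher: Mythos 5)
Your proposal is correct and follows essentially the same route as the paper: the necessity direction rests on the observation that orthogonal creases force the entire horizontal line through a cut onto the single cut line in the folded state (which you phrase via the factoring $(x,y)\mapsto(h(x),v(y))$), and the sufficiency direction is the canonical 1D fold \& cut with horizontal creases at midpoints. The paper states the necessity argument more tersely but with identical substance.
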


\begin{proof}
  Suppose without loss of generality that we have only horizontal cuts which go all the way across the paper.
  Projecting the cuts onto the vertical axis, we obtain an instance of
  1D fold \& cut.
  Applying the solution of Lemma~\ref{lem:1df&c} and extruding horizontally,
  we obtain a solution by accordion folding (alternating mountain and valley)
  with a horizontal crease halfway between each pair of consecutive cuts.

  Conversely, consider a horizontal cut; we will show it must cross the paper.
  The infinite horizontal line $\ell$ containing the cut must lie entirely on some horizontal line $\ell^\prime$ in the folded state, because we use only orthogonal creases.
  Then $\ell^\prime$ contains the cut in the folded state, so we must cut along $\ell^\prime$.
  Doing so cuts every point in the paper in $\ell$, so we must cut the entire intersection of $\ell$ with the paper.
\end{proof}

At this point, we can argue that the \defn{scissor-cut} model,
where every cut locally has material on both of its sides,
does not lose generality in orthogonal fold \& cut.

\begin{corollary} \label{cor:scissor cuts}
  Every solvable instance of orthogonal fold \& cut has a solution where
  no crease has a positive-length overlap with a cut.
\end{corollary}

\begin{proof}
  If $\alpha \notin \{0, \infty\}$, then every crease is axis-aligned
  while every cut is not, so they cannot overlap at more than a single point.
  Otherwise, the claim follows from Lemma~\ref{lem:horizontal cuts}.
\end{proof}

For the remaining case of nonzero finite $\alpha$,
we can scale the paper vertically by
$1/\alpha$ to make the slopes $\pm1$.
This transformation preserves orthogonal foldings
and thus solutions to orthogonal fold \& cut.
So for the rest of this section,
we assume without loss of generality that $\alpha=1$.

\subsection{Non-Axis-Aligned Cut Patterns}\label{sec:finite slope}

In this section, we solve the case where all cuts have slope $\pm1$.

Define a \defn{cut vertex} to be a point interior to the piece of paper
that is an endpoint of a cut or an intersection between noncollinear cuts.
Equivalently, we can subdivide cuts at intersection points between noncollinear
cuts and at endpoints of collinear cuts,
so that the cuts are disjoint except at their endpoints;
then the cut vertices are the endpoints of cuts
except for those on the boundary of the paper.
In this case, we can define the \defn{degree} of a cut vertex to be the
number of incident cuts of positive length sharing this endpoint.
For example, we obtain a degree-$0$ cut vertex
for each isolated cut of zero length,
and a degree-$4$ cut vertex from two properly crossing cuts.
We begin with an observation about cut-vertex degrees,
though we will not need it for our algorithmic characterization.

\begin{proposition}
  In any solvable instance of orthogonal fold \& cut,
  every cut vertex has degree $0$, $2$, or~$4$.
\end{proposition}

\begin{proof}
  Lemma~\ref{lem:slopes} guarantees that every cut vertex
  has maximum degree~$4$, with up to two cuts of slope $\alpha$ and
  up to two cuts of slope $-\alpha$.
  That every vertex has even degree follows from
  Corollary~\ref{cor:scissor cuts}
  (as argued in \cite{demaine1998folding}):
  because no crease lies along a cut, in the folded state every cut is locally surrounded by
  some material on either side of the cut line.
  Because the folded paper cannot meet the cut line except at cuts,
  every open region bounded by cuts and the paper boundary
  must be entirely on one side of the cut line.
  Therefore these regions are 2-colorable,
  with the two colors corresponding to the two sides of the cut line,
  such that two regions adjacent across a cut line have different colors.
  This property implies that all cut vertices have even degree:
  if you walk around a vertex, the region colors must alternate.
  (In fact, face 2-colorability and even vertex degrees are equivalent
  properties \cite{Koenig-1936}.)
\end{proof}

Next we analyze more precisely how creases interact with cuts.

\begin{lemma}\label{lem:no intersection}
  In any solution to an $\alpha = 1$ orthogonal fold \& cut instance,
  a crease cannot intersect a cut other than at a cut vertex.
\end{lemma}

\begin{proof}
  Folding on the crease would align the cut with its reflection across the crease. But if we are away from vertices, and because the cut is not parallel or perpendicular to the crease, the reflection of the cut across the crease does not have a cut. So folding on the crease would align the cut with a noncut, which is not allowed.
\end{proof}

Now we explain how we reduce orthogonal fold \& cut
to various 1D fold \& cut problems.
A thin vertical strip of the paper must lie in a vertical strip in the folded state.
Consider a thin vertical strip containing no vertical creases or cut vertices,
which must fold into a vertical strip of the same width.
In isolation, this strip is essentially an instance of signed 1D fold \& cut, using horizontal creases: a cut line crossing the strip becomes a cut point, and the sign of a cut point depends on the sign of the slope of the cut line. That the cut line takes some vertical distance to cross means we are not allowed to place horizontal creases in a small interval around cut points, but this constraint does not affect solvability: the canonical solution (from Lemma~\ref{lem:s1df&c}) with a crease at the midpoint between each pair of consecutive cuts will not have creases intersecting cuts because the cuts do not intersect in the strip.

For orthogonal fold \& cut to be solvable, each thin vertical or horizontal strip must be solvable as an instance of signed 1D fold \& cut. However, this condition is not sufficient: for instance, each strip in the cut pattern in Figure~\ref{fig:bad} is solvable, but the whole cut pattern is not. To solve orthogonal fold \& cut, we need to solve all of these 1D problems simultaneously.

\begin{figure}
  \centering
  \includegraphics{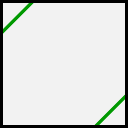}
  \caption{An unsolvable instance of orthogonal fold \& cut. Note that the canonical crease pattern (defined in Theorem~\ref{thm:of&c canonical works}), which would have a horizontal and vertical crease through the center of the square, does not give a valid solution. In order to place the two cut segments on the same line, we would need to also align some portion of a cut segment with part of the paper which is not supposed to be cut; this follows from Lemma~\ref{lem:orthogonal align}.}
  \label{fig:bad}
\end{figure}

We now partition the paper into horizontal intervals based on where
vertical creases are allowed by Lemma~\ref{lem:no intersection}.
For clarity, we consider just vertical creases,
but this discussion applies similarly to horizontal creases.
Each cut (subdivided so it has cut vertices only at its endpoints)
projects to an open horizontal interval in which vertical creases are banned
by Lemma~\ref{lem:no intersection}, so we call then open interval a \defn{band}.
Bands can overlap, in which case we merge them into a single band.
We call each remaining horizontal interval
(which is closed except possibly at the paper boundary)
a \defn{stripe}; creases are allowed within stripes.
Figure~\ref{fig:stripes and bands} shows an example.
Note that a stripe may have \defn{zero width}:
for instance, at a degree-$4$ cut vertex,
creases are allowed through the cut vertex (contained in a zero-width stripe)
but not near it (surrounded by bands).

\begin{figure}
  \centering

  \begin{subfigure}{\linewidth}
    \centering
  \includegraphics{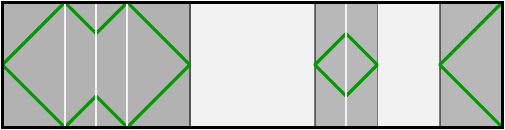}
  \caption{Vertical bands shaded dark, and vertical stripes (including
    zero-width stripes) shaded light.}
  \end{subfigure}

  \medskip

  \begin{subfigure}{\linewidth}
    \centering
    \includegraphics{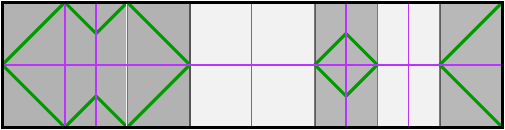}
    \caption{The canonical crease pattern in purple
      (defined in Theorem~\ref{thm:of&c canonical works}).}
  \end{subfigure}
  \caption{Analysis of the instance of orthogonal fold \& cut from Figure~\ref{fig:example}.}
  \label{fig:stripes and bands}
\end{figure}


This partitioning of the paper gives insight into what is needed
from a crease pattern.
In particular, we essentially now have two orthogonal instances of
1D interval fold \& cut, where bands are cut intervals
and stripes are empty intervals.

We prove the following theorem,
analogous to the situation for 1D interval fold \& cut:

\begin{theorem}\label{thm:of&c canonical works}
  Consider an instance of orthogonal fold \& cut
  in which every cut has finite nonzero slope $\pm\alpha$.
  Define the \defn{canonical} crease pattern to
  put one vertical (respectively, horizontal) crease at the center of each
  vertical (respectively, horizontal) stripe, including zero-width stripes.
  If the instance is solvable, then the canonical crease pattern is a solution.
\end{theorem}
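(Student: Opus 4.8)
The plan is to exploit the factorization of orthogonal foldings from Lemma~\ref{lem:orthogonal align} to split the two-dimensional problem into two independent one-dimensional problems, and then invoke our characterization of 1D interval fold \& cut. After scaling so that every cut has slope $\pm1$, the folded state of any orthogonal crease pattern sends $(x,y)$ to $(h(x),v(y))$, where $h$ is determined solely by the vertical creases and $v$ solely by the horizontal creases. Thus the vertical-crease data and the horizontal-crease data may be chosen independently, and I will argue that each is governed by its own instance of 1D interval fold \& cut.

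First I would set up the correspondence. Take the partition of the horizontal axis into vertical stripes and bands, merging maximal runs of bands separated only by zero-width stripes to produce the \emph{cut intervals} of a 1D interval fold \& cut instance $H$, whose required creases are exactly the zero-width-stripe positions and whose empty regions are the positive-width stripes; define $V$ analogously for the vertical axis. The preceding discussion (the zero-width- and positive-width-stripe paragraphs) shows that a choice of vertical creases yields a correct cut only if $h$ solves $H$ --- all bands must land on the common interval $I$ that the cut crosses, no positive-width stripe interior may land on $I$, and each zero-width stripe must carry a crease --- and symmetrically that $v$ must solve $V$. Hence solvability of the orthogonal instance forces solvability of both $H$ and $V$ by projecting any solution through the factorization. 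By our characterization of 1D interval fold \& cut, the canonical crease patterns for $H$ and $V$ are then solutions; and by construction these canonical 1D patterns are exactly the vertical and horizontal creases of the canonical orthogonal pattern (a crease at the center of each stripe, a zero-width stripe being its own center, a positive-width stripe's center being the midpoint between the adjacent cut intervals). So the canonical orthogonal pattern induces an $h$ solving $H$ and a $v$ solving $V$.

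It remains to reassemble: to check that $h$ solving $H$ and $v$ solving $V$ together yield a single correct cut. I would use the cell structure. By Lemma~\ref{lem:no intersection} every cut segment lies in a cell consisting of one vertical band times one horizontal band, within which the crease pattern is empty, so $h\times v$ restricts to an affine isometry (up to reflections) carrying the cell's diagonal cut segment to a diagonal of a sub-rectangle of $I\times J$. Correctness then has two halves. For the forbidden region: a point whose horizontal coordinate lies in a positive-width stripe interior has $h(x)\notin I$, so its image misses the cut line entirely, and symmetrically for vertical stripes; while a non-cut point inside a cell stays off the cut line because an affine map carrying the cell's diagonal onto the line carries off-diagonal points off the line. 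For the cut segments themselves, each must land on one common line of one slope.

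This last point is where I expect the real difficulty. The factorization makes $h$ and $v$ independent, but the single diagonal cut couples them: in a cell with reflection parities $\sigma,\tau$ and a cut segment of slope $\mu$, the image has slope $\sigma\tau\mu$ and an intercept of the form $b_j - sa_i + \tau d_{ij}$, so landing every cut segment on one line imposes both a parity condition $\sigma\tau\mu = s$ fixing a single global slope and an offset equation relating the landing positions $a_i,b_j$ of the two bands. The crux is to show the canonical $h,v$ realize these relations consistently across all cells at once --- equivalently, that correctly aligning each strip as a signed 1D instance (Lemma~\ref{lem:s1df&c}), which the midpoint creases of the canonical pattern do, forces all cut segments onto one line of one slope rather than a family of parallel or mismatched lines. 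I would argue that the canonical pattern reproduces the same relative band alignment as the assumed solution, so that because all offsets $d_{ij}$ come from the single fixed cut pattern, the consistency guaranteed by solvability transfers to the canonical solution; making this orientation-and-offset bookkeeping airtight, especially around zero-width stripes and four-way cut intersections, is the main obstacle.
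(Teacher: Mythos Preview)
Your approach is essentially correct and close to the paper's, but you have framed it in a way that manufactures a difficulty the paper sidesteps. The paper does not fully decompose into $H$ and $V$, solve each from scratch, and then try to reassemble; instead it argues by \emph{local replacement}: start from an assumed solution, observe (as you do) that every zero-width stripe already carries a crease, and then, within each positive-width stripe, replace whatever vertical creases are there by the single midpoint crease. Because the two stripe boundaries were already aligned in the solution (both land on the same endpoint of $I$), this replacement leaves $h$ unchanged on every band; since all cuts live in band$\times$band cells, the folded cut pattern is unchanged. Doing the same for horizontal stripes yields the canonical pattern, which is therefore still a solution.

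Your ``reassembly'' worry is exactly the gap between these two framings, and it dissolves once you notice the key fact you almost state: the 1D interval argument already guarantees that the canonical $h$ agrees with the assumed solution's $h$ on every cut interval (band), up to one global isometry of the line, and likewise for $v$. Hence $(h_{\mathrm{can}},v_{\mathrm{can}})$ agrees with the original folded state on all of band$\times$band up to one global isometry of the plane, which carries the original cut line to a single line of slope $\pm1$ and carries non-cut points off it. So the parity/offset bookkeeping you anticipate is not cell-by-cell but a single global affine map, and no per-cell consistency check is needed. If you rewrite your proof to lead with ``canonical $h$ equals original $h$ on bands'' rather than ``canonical $h$ solves $H$'', the difficulty vanishes and you recover the paper's argument.
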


\begin{proof}
By Lemma~\ref{lem:orthogonal crease patterns}, it suffices to specify the
locations of the creases, which determines the position of each point
in the folded state.  In particular,
Lemma~\ref{lem:orthogonal crease patterns} gives us an actual folding of
the unassigned canonical crease pattern to test.
It remains to show that any solution can be converted to use the creases
in the canonical crease pattern.

Consider the vertical creases in any solution; horizontal creases are symmetric.
By Lemma~\ref{lem:no intersection}, there cannot be any vertical creases within a vertical band. We consider what the crease pattern can look like within a single vertical stripe, considering zero-width stripes and then positive-width stripes.
Refer to Figure~\ref{fig:2d_I}.

\begin{figure}
  \centering
  \begin{subfigure}{\linewidth}
    \centering
    \includegraphics[scale=0.5]{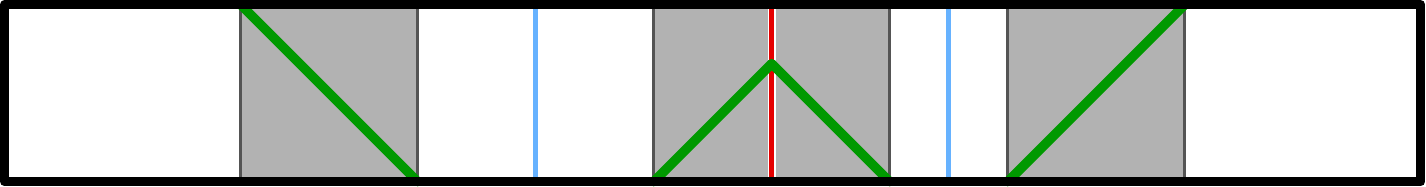}
  \end{subfigure}
  $$\downarrow$$
  \begin{subfigure}{\linewidth}
    \centering
    \includegraphics[scale=0.5]{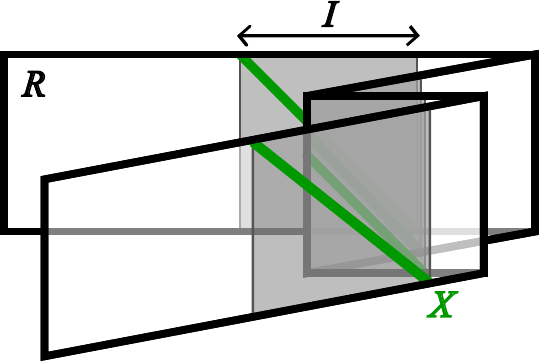}
  \end{subfigure}
    \caption{A small example instance and solution of orthogonal fold \& cut,
      illustrating the definition of the horizontal interval~$I$.}
  \label{fig:2d_I}
\end{figure}

\paragraph{Zero-width stripes.}
We argue that there must be a vertical crease at each vertical zero-width stripe.

Suppose for contradiction that a solution to orthogonal fold \& cut does
not put a crease at a vertical zero-width stripe.
Then the bands on opposite sides of the stripe remain on opposite sides
in the folded state, so there are cut segments on both sides of the stripe
in the folded state.
In particular, the intersection of the (diagonal) cut line with the stripe is
in the interior of the folded state, so some portion of the stripe
must lie directly above this intersection point,
and some portion must lie directly below it.

Because the stripe is connected, some layer in the folded state must pass the intersection point without a horizontal crease there. Because there is also not a vertical crease at the stripe, there is a neighborhood of the intersection point in that layer which has no creases. In particular, this neighborhood contains a cut segment that crosses the stripe not at a cut vertex, so our stripe must have been a band after all.

\paragraph{Positive-width stripes.}
Because the paper is an axis-aligned rectangle and the crease pattern is
orthogonal, the folded state is a rectangle~$R$,
with some diagonal cut line across it.
If we vertically project the portion $X$ of the cut line
that is within the folded rectangle~$R$,
then we obtain a horizontal interval~$I$.
Moreover, the image in the folded state of any vertical line $\ell$ crossing the unfolded paper
is a vertical segment crossing the entire folded state.
The image of $\ell$ in the folded state must intersect $X$
if and only if $\ell$ intersects a cut segment (including cut vertices).
In particular, vertical lines in bands (which cross cut segments)
must land in $I$, and vertical lines in the interior of positive-width
stripes (which do not intersect cut segments) must not land in~$I$.

Hence the vertical creases must result in all bands lying on a
horizontal open interval $I$, with no positive-width stripes
overlapping the interval.
When we also include the requirement that every zero-width stripe must have a crease,
this is exactly 1D interval fold \& cut:
bands are cut intervals, and stripes are empty intervals.
In the folded state, each boundary between a positive-width stripe and a band
must lie on one of the ends of~$I$.
Each positive-width stripe will be entirely on either the right or the left of
the interval, with its two boundary edges aligned on the same interval end.

Within a positive-width stripe, there may be many possible crease patterns,
but (just as in Lemma~\ref{lem:1d interval}) all that matters is that the boundaries of the stripe are aligned,
and the interior of the stripe does not pass the end of the interval $I$
with the cut~$X$.
All crease patterns satisfying this constraint
will result in the same cut pattern,
because they disagree only in a region that does not intersect the cut~$X$.
One such crease pattern is to simply put one vertical crease at the
center of each stripe, which is the canonical crease pattern.
\end{proof}

Thus, for $\alpha=1$, it suffices to compute the canonical crease pattern,
and then check whether it works.
This is fairly easy to implement algorithmically.
First, find the locations of vertical (respectively, horizontal)
stripes by looking at the horizontal (respectively, vertical) intervals
not covered by cuts.
When we fold on a crease in the center of a stripe,
the near edges of the two adjacent bands are aligned,
and the bands themselves are aligned until one of them ends.
So we just need to check that the cut pattern in the narrower adjacent band
exactly matches the cut pattern in the portion of the wider band
that is the reflection across the crease of the narrower band.
If this is the case for every vertical and horizontal stripe,
then the canonical crease pattern is a solution,
and otherwise the orthogonal fold \& cut instance is not solvable.

Combining Lemmas~\ref{lem:slopes} and \ref{lem:horizontal cuts} with Theorem~\ref{thm:of&c canonical works}, we have a complete characterization of solvable orthogonal fold \& cut instances on axis-aligned rectangular paper.

\section{Open Problems}
\label{sec:open}

There are several problems related to orthogonal fold \& cut that remain unsolved.

\begin{enumerate}
\item What if we allow nonrectangular paper,
  or non-axis-aligned rectangular paper,
  as we used in our font in Figure~\ref{fig:font}?
  Lemma~\ref{lem:orthogonal crease patterns} holds for any orthogonally convex shape or when restricting to simple folds, but our analysis of stripes and bands breaks down for anything other than an axis-aligned rectangle. Perhaps one could find a way to extend the cut pattern to the bounding axis-aligned rectangle. For paper that is not orthogonally convex with nonsimple folds allowed, the problem seems harder still: different ``arms'' of the paper could be folded in different ways.
\item What if we allow a slightly larger set of possible creases? For instance, one could require creases to be at multiples of $45^\circ$ (as in box pleating), or some other angle. An analog of Lemma~\ref{lem:slopes} still holds, but we lose the large amount of structure present in orthogonal foldings.
\item Our font in Figure~\ref{fig:font} leaves room for improvement in the clarity of some characters, and we welcome suggestions. It would also be interesting to develop two different typefaces, one where the folds and paper boundary are axis-aligned and one where the folds are all orthogonal to each other but not the paper boundary.
\end{enumerate}

\section*{Acknowledgments}

This work was initiated during an MIT class on Geometric Folding Algorithms
(6.849, Fall 2020).  We thank the other participants of that class---in particular, Jamie Tucker-Foltz, Naveen Venkat, Hanyu Zhang, Midori Zhou---for helpful discussions and providing a productive research environment.

\bibliographystyle{alpha}

\bibliography{paper}

\newcommand{\etalchar}[1]{$^{#1}$}
\begin{thebibliography}{ADD{\etalchar{+}}17}

\bibitem[ADD{\etalchar{+}}17]{asao2017folding}
Yasuhiko Asao, Erik~D. Demaine, Martin~L. Demaine, Hideaki Hosaka, Akitoshi Kawamura, Tomohiro Tachi, and Kazune Takahashi.
\newblock Folding and punching paper.
\newblock {\em Journal of Information Processing}, 25:590--600, 2017.

\bibitem[BDEH01]{bern2002disk}
Marshall Bern, Erik Demaine, David Eppstein, and Barry Hayes.
\newblock A disk-packing algorithm for an origami magic trick.
\newblock In {\em Origami$^3$: Proceedings of the 3rd International Meeting of Origami Science, Math, and Education}, pages 17--28. A K Peters, 2001.

\bibitem[DDL98]{demaine1998folding}
Erik~D. Demaine, Martin~L. Demaine, and Anna Lubiw.
\newblock Folding and cutting paper.
\newblock In {\em Revised Papers from the Japan Conference on Discrete and Computational Geometry ({JCDCG}'98)}, volume 1763, pages 104--117, 1998.

\bibitem[DO07]{Demaine-O'Rourke-2007}
Erik~D. Demaine and Joseph O'Rourke.
\newblock {\em Geometric Folding Algorithms: Linkages, Origami, Polyhedra}.
\newblock Cambridge University Press, July 2007.

\bibitem[K{\"{o}}n36]{Koenig-1936}
D\'enes K{\"{o}}nig.
\newblock {\em {Theorie der endlichen und unendlichen Graphen}}.
\newblock Akademische Verlagsgesellschaft (Leipzig), 1936.
\newblock See also the English translation, \emph{Theory of Finite and Infinite Graphs}, Birkhauser (Boston), 1990.

\end{thebibliography}

\end{document}